\title{\LARGE \bf
Cause Mining and Controller Synthesis with STL
}
\author{Irmak Saglam and Ebru Aydin Gol% <-this % stops a space
% \thanks{This work was not supported by any organization}% <-this % stops a space
\thanks{This work has received funding from the European Union's Horizon 2020 research and innovation programme under the Marie Sk\l{}odowska-Curie grant agreement No 798482 and from the Scientific and Technological Research Council of Turkey (TUBITAK) under the TUBITAK project number 117E242.}
\thanks{Irmak Saglam and Ebru Aydin Gol are with the Department of Computer Engineering,
      Middle East Technical University, Ankara/TURKEY
        {\tt\small \{saglam.irmak, ebrugol\}}{\tt\small @metu.edu.tr}}%
}
\newcommand{\pastF}{\ensuremath{F^-}}
\newcommand{\pastG}{\ensuremath{G^-}}
\newcommand{\since}{\ensuremath{S}}
\newtheorem{proposition}{\bf Proposition}[section]
\newtheorem{example}{\bf Example}[section]
\newtheorem{problem}{\bf Problem}[section] % abbreviations, definitions etc.
\begin{document}

\maketitle
\thispagestyle{empty}
\pagestyle{empty}

%%%%%%%%%%%%%%%%%%%%%%%%%%%%%%%%%%%%%%%%%%%%%%%%%%%%%%%%%%%%%%%%%%%%%%%%%%%%%%%%
\begin{abstract}
Formal control of cyber-physical systems allows for synthesis of control strategies from rich specifications such as temporal logics. However, the classes of systems that the formal approaches can be applied to is limited due to the computational complexity. Furthermore, the synthesis problem becomes even harder when non-determinism or stochasticity is considered. In this work, we propose an alternative approach. First, we mark the unwanted events on the traces of the system and generate a \textit{controllable cause} representing these events as a Signal Temporal Logic (STL) formula. Then, we synthesize a controller based on this formula to avoid the satisfaction of it. Our approach is applicable to any system with finitely many control choices. While we can not guarantee correctness we show on an example that the proposed approach reduces the number of the unwanted events. In particular,  we validate it for the congestion avoidance problem in a traffic network. 

\end{abstract}

%%%%%%%%%%%%%%%%%%%%%%%%%%%%%%%%%%%%%%%%%%%%%%%%%%%%%%%%%%%%%%%%%%%%%%%%%%%%%%%%
\section{INTRODUCTION}

% Temporal logics, such as linear temporal logic, are primarily used as specification languages in verification of finite systems such as hardware or software models. Due to their expressivity, and existence of algorithms and tools for system analysis, they have gained popularity in other areas such as biology~\cite{BaBeWe-TAC07}, dynamical systems~\cite{tabuada2006linear} and robotics~\cite{Hadas:2007}.  

In formal control, the goal is to synthesize control strategies from formal specifications expressed in rich specification languages. The main approach in formal control of dynamical systems is based on construction of a finite abstraction of the system, and then synthesis of a control strategy for the abstract model from the specification~\cite{belta2017}. Finally, the strategy is mapped to the original system. This approach has been applied to linear, switched, and piecewise-affine, and hybrid systems~\cite{tabuada2006linear,belta2017,Kavraki:MPlanning}. In addition, game theoretic and probabilistic versions of the formal control problem has been studied for non-deterministic and stochastic systems~\cite{zamani2014symbolic,belta2017,KlBe-HSCC08-book}. Despite the promising results on synthesis of correct-by-construction controllers, the abstraction based approach suffers from scalability issues due to the complex operations involved in the abstraction process and the size of the resulting abstract model. Furthermore, the developed methods are specific to the underlying system and extensions to more complex systems such as systems with a variety of submodules, e.g. Simulink models~\cite{Simulink}, is not straightforward. 

In this work, we propose an alternative approach that is applicable to any discrete-time dynamical system with finitely many control choices. 
In particular, we consider the following problem: given a discrete time control system with a finite control set, a function over the state and control spaces of the system that identifies the unwanted events, find a feedback control strategy to minimize the number of these events.  We propose a two step solution to this problem. The first step is the identification of  the controllable causes of the unwanted events in the form of past time signal temporal logic (STL) formulas. The second step is the synthesis of a feedback controller that avoids the satisfaction of the cause formula found in the first step. In addition, we iteratively apply these two steps, namely cause identification and controller synthesis, to refine the controller and further reduce the number of these events.

Synthesis of STL formulas from a dataset has been studied in different forms including generation of a classifier to decide whether a trace exhibit a desired (or undesired) behavior~\cite{Bombara:2016,Kong:Inference:2014}, identification of high level system behaviors from traces~\cite{miningjournal,Jha2019}, and synthesis of monitoring rules~\cite{codit2018,acc2019}. In~\cite{miningjournal}, the goal is to synthesize parameters for a given template formula. For example, for the template formula $G_{[0,\infty]} x < p$, a valid valuation of the parameter $p$ gives an upper bound on the value of variable $x$ along all the traces in the dataset. On the other hand, in~\cite{Bombara:2016,Kong:Inference:2014,Jha2019,codit2018,acc2019,Bartocci2014}, both the formula structure (e.g. a template formula), and the parameters of it are found simultaneously. These works differ based on the structure of the dataset. In~\cite{miningjournal,Jha2019} only positive examples, whereas in~\cite{Bombara:2016,Kong:Inference:2014,Bartocci2014}, both positive and negative examples are considered. A common aspect of these works~\cite{Bombara:2016,Kong:Inference:2014,miningjournal,Jha2019,Bartocci2014} is that traces are considered as a whole (e.g. single label for a trace). Whereas, similar to this work, in~\cite{codit2018,acc2019}, each time point has a label, and we build on the results from~\cite{codit2018,acc2019}.

In the cause identification step, we generate a past time STL formula in a particular structure so that a controller can prevent the satisfaction of it. The formula represents a set of causes that can yield to the unwanted events in a human interpretable way. Each cause has a control part over the control variables and a general part %The control part can be violated with an appropriate choice of control. The latter 
that can be any STL formula over the system's state and control variables. 
% This formula structure shapes the developed controller. Essentially, t
The formula maps to a feedback controller that generates a control input violating each cause formula. 
% The controller checks the formulas derived from the control and general parts of each cause.  When the general part is guaranteed to hold, the controller picks a control input to break the control formula. 
 As opposed to the abstraction based~\cite{belta2017} and mixed integer linear program based controllers for STL specifications~\cite{Raman:2015:RSS,7447084}, the proposed controller does not require any offline computation and the system dynamics are not explicitly considered. At each time step, it simply checks the trace against a past time STL formula to generate a control input by exploiting the particular structure of the formula.

The main contribution of this work is the novel dynamics-independent framework for synthesis of feedback controllers to avoid unwanted events. This framework can be applied to any discrete time system with a finite control set. However, correctness is not guaranteed, i.e., the unwanted events may occur on the traces of the closed loop system. For this reason, we only aim at reducing the number of these events. The reduction rate depends on the causality relation between the control inputs and the unwanted events. We show on an example that the proposed controller can reduce the number of unwanted events significantly when the formal control problem is infeasible.% \red{Also, at the end of the cause diagnosis step, the cause of  unwanted events are expressed as an STL formula. In this paper, we directly use this formula to build a controller and not utilize its human-readability property, which carries a great potential value that can be exploited in future works; especially considering similar approaches of diagnosis usually don't address human intuition.} %The stronger relation yields the higher reduction on the frequency.  

% \red{maybe research on cause - effect}

The paper is structured as follows. Sec.~\ref{sec:pre} presents necessary preliminaries. Sec.~\ref{sec:problem} formulates the cause identification and the controller synthesis problems. Sec.~\ref{sec:formula} presents our solution for the cause identification problem.  Sec.~\ref{sec:synthesis} describes the controller synthesis approach. Sec.~\ref{sec:conc} concludes the paper with future research directions.

%%%%%%%%%%%%%%%%%%%%%%%%%%%%%%%%%%%%%%%%%%%%%%%%%%%%%%%%%%%%%%%%%%%%%%%%%%%%%%%%
\section{Preliminary Information}
\label{sec:pre}

% \blue{(ebru) updated this section}

% !TEX root = cdc.tex

\subsection{System Definition}

Consider the discrete time control system
\begin{equation}\label{eq:system}
x_{k+1} = f(x_k, u_k, w_k)
\end{equation}
where $x_k = [x_k^{0}, \ldots, x_k^{n-1}] \in \mathbb{X} \subset \mathbb{R}^n$ is the state of the system, $u_k = [u_k^{0}, \ldots, u_k^{m-1}] \in \mathbb{U} \subset \mathbb{R}^m$ is the control input and $w_k \in \mathbb{W} \subset \mathbb{R}^l$ is the noise at time step $k$. Each control input takes values from a finite set, i.e., $\mathbb{U}^i = \{c^{(i,1)},\ldots, c^{(i,M_i)}\} \subset \mathbb{R}, i=0,\ldots,m-1$ and $\mathbb{U} = \mathbb{U}^0 \times \ldots \times \mathbb{U}^{m-1}$. A finite length trajectory of system~\eqref{eq:system} is denoted by \[\mathbf{x} = (x_0, u_0), \ldots, (x_{N}, u_{N}),\] where for each $k=0,\ldots,N-1$, $x_{k+1} = f(x_k, u_k, w_k)$ for some $w_k \in \mathbb{W}$. The label of a trajectory  $\mathbf{x}$ is a binary sequence of the same length
\begin{equation}\label{eq:labels}
\mathbf{l} = l_0, \ldots, l_N
\end{equation}
where $l_k \in \{0,1\}$. The positive label, $l_k = 1$, indicates that a bad event occurred at time step $k$. The label sequence can be generated by a function over the state and system traces, e.g. $g: \mathbb{X} \times \mathbb{U} \to \{0,1\}$, and $l_k = g(x_k, u_k)$. A set of labeled traces of system~\eqref{eq:system} is denoted as 

\begin{equation}\label{eq:dataset}
\mathcal{D} = \{(\mathbf{x} _i, \mathbf{l} _i)\}_{i=1, \ldots, D}.
\end{equation}

\begin{example}\label{ex:trafficsystem}
We use a traffic system composed of 5 links and 2 traffic signals shown in Fig.~\ref{fig:traffic} as a running example. Each link of the system is modeled as a finite queue, which results in a piecewise-affine system. The details of the model can be found in~\cite{coogan2016traffic}. In this model, the system state captures the number of vehicles on each link and the modes of the traffic signals are control inputs. For the considered system, the capacity of the horizontal and vertical links are $40$ and $20$, respectively, i.e. $x^i \in [0,40]$ for $i \in \{0,1,2\} $, and $x^i \in [0,20]$ for $i \in \{3,4\} $.  The control inputs are defined as $\mathbb{U}^1 = \mathbb{U}^2 = \{0,1\}$, where $0$ denotes horizontal actuation and $1$ denotes vertical actuation. The system parameters are defined as follows (see~\cite{coogan2016traffic} for system dynamics): $c_i = 20$ for $i \in \{0,1,2\} $ and $c_i = 10$ for $i \in \{3,4\} $ (saturation flow), $d_i = 5$,  for $i \in \{0,3,4\}$ (exogenous flow bound), $\alpha^{0}_{0,1} = \alpha^{0}_{1,2} = 1, \alpha^{1}_{3,1} = \alpha^{1}_{4,2} = 1$ (supply ratios), $\beta_{0,1} = \beta_{1,2}=\beta_{2,3} = 0.75$, $\beta_{3,1} = \beta_{4,1} = 0.3$ (turn ratios).  
% $c_i = 20$ for $i \in \{0,1,2\} $ and $c_i = 10$ for $i \in \{3,4\} $ (saturation flow), $d_0^{\max} = 8$,  for $d_3^{\max} = d_4^{\max} = 4$ (exogenous flow bound), $\alpha^{0}_{0,1} = \alpha^{0}_{1,2} = \alpha^{1}_{3,1} = \alpha^{1}_{4,2} = 1$ (supply ratios), $\beta_{0,1} = \beta_{1,2}=\beta_{2,3} = 0.8$, $\beta_{3,1} = \beta_{4,1} = 0.4$ (turn ratios),  

% In a randomly generated traffic system for 20 system traces each having 100 time points (i.e., A system with 2000 time points) there are 919 bad labeled data points on average.  
We simulated the system from random initial conditions and picked control values from $\mathbb{U}$ randomly at each time step to generate a dataset. In addition, we used the following formula to label the system traces:
\begin{equation}\label{eq:violationformula}
x^0 < 30 \wedge x^1 < 30  \wedge  x^2 < 30 \wedge x^3 < 15 \wedge x^4 < 15 
\end{equation}
Time points in which this formula is violated are labeled with $1$, all the other time points are labeled with $0$. Essentially, label 1 is generated whenever a link has more vehicles than $75\%$ of its capacity.We generate a dataset  $\mathcal{D}$ with 20 labeled traces. Each trace has length $100$. Out of $2000$ data points, $911$ of them were labelled with 1, which indicates that the system is congested $46\%$ of the time.
\end{example}

\begin{figure}[h]
\centering
\includegraphics[width=7cm]{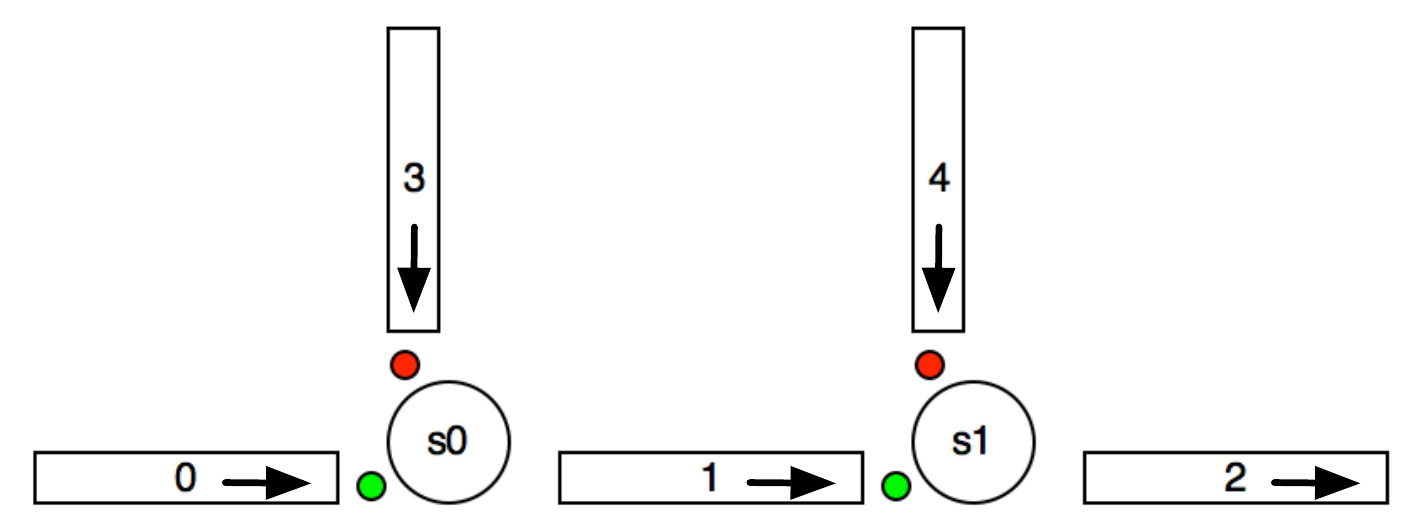}
\caption{A traffic network composed of 5 links and 2 signals. The flow directions are shown on links.}\label{fig:traffic}
\end{figure}

\subsection{Binary Classification}\label{sec:classification}

The goal in the binary classification task is to assign a binary label to each data point. The success of a classifier over a labeled dataset is computed according to the four basic categories of the classifier results and original labels: true positives, false positives, true negatives and false negatives.  

A binary classifier $P$ for traces of system~\eqref{eq:system} with length $M+1$ is defined as 
\begin{equation}\label{eq:classifier}
P: (\mathbb{X} \times \mathbb{U})^{M+1} \to \{0,1\}.
\end{equation}
% Consider classifier $P: (\mathbb{X} \times \mathbb{U})^{M+1} \to \{0,1\}$ that assigns a label to traces of length $M +1 $ generated by system~\eqref{eq:system}.

For such a classifier, the computation of the number of true positives ($tp$) and false positives ($fp$) over a dataset $\mathcal{D}$~\eqref{eq:dataset} is shown in~\eqref{eq:tpfp}. The computation for the number of true negatives ($tn$) and false negatives ($fn$) are similar, and omitted for brevity. 
{\small{
\begin{align}\label{eq:tpfp}
tp &= \sum_{(\mathbf{x}, \mathbf{l}) \in \mathcal{D}} \sum_{i=M}^N \begin{cases} 1 \text{ if }P((x_{i-M}, u_{i-M}), \ldots, (x_{i}, u_{i})) == 1 \\ \quad\quad\quad\quad \text{ and } l_i == 1 \\ 0  \text{ otherwise }  \end{cases} \\
fp &= \sum_{(\mathbf{x}, \mathbf{l}) \in \mathcal{D}} \sum_{i=M}^N \begin{cases} 1 \text{ if }P((x_{i-M}, u_{i-M}), \ldots, (x_{i}, u_{i})) == 1 \\ \quad\quad\quad\quad \text{ and } l_i == 0 \\ 0  \text{ otherwise }  \end{cases} \nonumber
\end{align}
}}

The ratio of correctly classified instances ( $(tp+tn)/(tp+tn+fp+fn)$), precision ($ tp/(tp+fp)$), and recall ($ tp/(tp+fn)$) are commonly used success measures for classifiers. Another widely used metric is $F_\beta$ score, which combines precision and recall via a weight parameter $\beta$:
\begin{equation} 
F_{\beta} = \frac{(1 + {\beta}^2) * precision * recall }{( {\beta}^2 * precision ) + recall }
\end{equation}
$F_{\beta}$ takes values between $0$ (worst) and $1$ (best). Increasing $\beta$ puts more emphasis on recall (and false negatives), on the other hand , decreasing $\beta$  puts more emphasis on precision.

% \textbf{$F_{\beta}$ score ve fp, tp definitionlarini biz yapmayip referans verebiliriz belki.}

% In this work, our goal is to identify the causes of the bad events as temporal logic formulas.

\subsection{Signal Temporal Logic}

We express signal specifications as ptSTL formulas~\cite{Asarin2012}. The syntax of ptSTL over the variables of system~\eqref{eq:system} is inductively defined as:\footnote{As the considered controls take values from finite sets, only equality constraints are used for controls.} 
\begin{align}\label{eq:ptstl}
\phi = & true \mid  x^i \sim c \mid u^i = c \mid \neg \phi \mid \phi_1 \wedge \phi_2 \mid  \phi_1 \since_{[a,b]} \phi_2  
% & \phi_1 \vee \phi_2 \mid \pastF_{[a,b]} \phi \mid \pastG_{[a,b]}  \phi \mid \phi_1 \since_{[a,b]} \phi_2,
\end{align}
where $true$ stands for the Boolean constant true, $x^i$ is a state variable, $u^i$ is a control variable, $\sim \in \{>, <\}$, $c \in \mathbb{R}$ is a constant, $[a, b] \subset \mathbb{R}$ represents a time interval with $a \leq b$, and $\since$ is ptSTL operator representing  \textit{Since}. 
% $\pastF, \pastG$ and $\since$ are ptSTL operators representing \textit{Previously}|(Past Eventually)|, \textit{Always}|(Past Globally)| and \textit{Since} respectively.% \footnote{Although $\pastF$ and $\pastG$ can be represented in terms of $\since$, we use these two operators due to their ability to represent some formulas in a simpler way.}
The semantics of a ptSTL formula is defined over a signal for a given time point. The notation $(\mathbf{x},k) \models \phi$ is used to denote that the signal $\mathbf{x}$ satisfies $\phi$ at time $k$, which is defined as follows:

{\small{
\begin{align*}
& (\mathbf{x},k) \models x^i  \sim c   \iff   x^i_k \sim c    \text{ where } \sim \in \{>, <\}  \\
& (\mathbf{x},k) \models u^i  = c   \iff   u^i_k = c  \\
& (\mathbf{x},k) \models \neg \phi  \iff   (\mathbf{x},k) \not \models  \phi \\
& (\mathbf{x},k) \models  \phi_1 \wedge \phi_2  \iff   (\mathbf{x},k) \models\phi_1   \text{ and }  (\mathbf{x},k) \models  \phi_2\\
% (\mathbf{x},k) \models  \phi_1 \vee \phi_2  &\iff   (\mathbf{x},k) \models  \phi_1 \text{ or } (T,k) \models  \phi_2\\
% (\mathbf{x},k) \models \pastF_{[a,b]} \phi &\iff  (\mathbf{x},j) \models \phi \\ 
% &\text{ for some }  j \in ([k-b, k-a] \cap [0, k])  \\
%(\mathbf{x},k) \models \pastG_{[a,b]} \phi &\iff  (\mathbf{x},j) \models \phi \\
% & \text{ for each }  j \in ([k-b, k-a] \cap [0, k])  \\
& (\mathbf{x},k) \models \phi_1 \since_{[a,b]} \phi_2 \iff  \text{there exists}  j \in ([k-b, k-a] \cap [0, k])  \\
& \text{ such that } (\mathbf{x},j)\models \phi_2 \text{ and for each } l \in [j, k], (\mathbf{x},l) \models \phi_1
\end{align*} 
}}
The \textit{Previously}|(Past Eventually)| ($\pastF$), \textit{Always}|(Past Globally)| ($\pastG$) operators are defined for notational convenience as they are special cases of $\since$ operator:
\[ \pastF_{[a,b]} \phi := true \since_{[a,b]} \phi \quad \pastG_{[a,b]} \phi := \neg \pastF_{[a,b]} \neg \phi  \]
% \begin{labeling}{$\phi_1 \since_{[a,b]} \phi_2$}
% \item [$\pastF_{[a,b]} \phi$:]$\phi$ holds for at least one time point in  $[k-b, k-a]$,\\
% \item [$\pastG_{[a,b]}  \phi$:]$\phi$ holds for all time points in  $[k-b, k-a]$, \\
% \item [$\phi_1 \since_{[a,b]} \phi_2$:]$\phi_2$ holds for some time point $k_0$ in $[k-b, k-a]$ and $\phi_1$ holds for all time points in $[k-b, k_0]$.\\
% \end{labeling}
% \textbf{Signal Temporal Logic kismini  kendimiz tanimlamayip referans verebilir miyiz?}

\newtheorem{defn}{\textbf{Definition}} 
\begin{defn}[Parametric ptSTL Formula]\label{defn:definition1} Parametric ptSTL is an extension of STL that allows to represent some of the numeric constants with parameters \cite{Asarin2012}. Given a parametric ptSTL formula $\phi$ and a parameter valuation $v$, a ptSTL formula $\phi(v)$ is obtained by replacing parameters in $\phi$ from the corresponding constants in $v$, e.g. for formula $\phi= \pastF_{[0,b]}(x>c)$, and valuation $v: b=3,c=5$, $\phi(v) $ is $\pastF_{[0,3]}(x>5)$.
\end{defn}
\begin{defn}[Length of a formula]
Informally, the length of a formula gives the oldest time point that is required to evaluate the formula. The length is denoted with $\mathfrak{l}(\phi)$ and computed as follows:
\begin{align*}
&\mathfrak{l}(true)=\mathfrak{l}(x^i \sim c)=\mathfrak{l}(u^i=c)=0\\
& \mathfrak{l}( \neg \phi ) = \mathfrak{l}(  \phi ) , \quad  \mathfrak{l}( \phi_1 \wedge \phi_2 ) = \max( \mathfrak{l}( \phi_1), \mathfrak{l}( \phi_2 ) )\\
% & \mathfrak{l}(  \pastF_{[a,b]} \phi ) = \mathfrak{l}(  \pastG_{[a,b]} \phi ) = \mathfrak{l}( \phi ) + b\\
& \mathfrak{l}( \phi_1 \since_{[a,b]} \phi_2 ) = b + \max(\mathfrak{l}( \phi_1), \mathfrak{l}( \phi_2 ))
\end{align*}
\end{defn}

\begin{defn}[Operator count]
The \textit{operator count of a formula} is the total number of Boolean and temporal operators that appear in the formula. 
\end{defn}

A ptSTL formula $\phi$ can be used as classifier as given in ~\eqref{eq:classifier} for the traces of system~\eqref{eq:system}. In particular, $\mathfrak{l}(\phi)$ is the length of the trace fragment, and the classification result is 1 when the considered fragment satisfies the formula:
% $\mathfrak{l}(\phi)$ (e.g. as defined in~\eqref{eq:classifier}) by considering the satisfaction relation as the classifier output, i.e., for trace fragment $(x_{i-\mathfrak{l}(\phi)} + 1, u_{i-\mathfrak{l}(\phi)}) + 1, \ldots, (x_{i}, u_{i})$ generate label 1 if
\[  (( (x_{i-\mathfrak{l}(\phi)+ 1} , u_{i-\mathfrak{l}(\phi)+ 1}), \ldots, (x_{i}, u_{i})   ), \mathfrak{l}(\phi) )  \models \phi    \] 

It is straightforward to derive the success measures, e.g. recall, precision, $F_\beta$-score, for a ptSTL formula $\phi$ over a dataset $\mathcal{D}$. In the remainder of the paper, $F_{\beta}(\phi(v), \mathcal{D})$ is used to denote the $F_\beta$ score  of the parametric formula $\phi$ with valuation $v$ over the dataset $\mathcal{D}$. The dataset is omitted (e.g. $F_{\beta}(\phi(v))$) when it is clear from the context.

% \addtolength{\textheight}{-3cm}   % This command serves to balance the column lengths
                                  % on the last page of the document manually. It shortens
                                  % the textheight of the last page by a suitable amount.
                                  % This command does not take effect until the next page
                                  % so it should come on the page before the last. Make
                                  % sure that you do not shorten the textheight too much.

%%%%%%%%%%%%%%%%%%%%%%%%%%%%%%%%%%%%%%%%%%%%%%%%%%%%%%%%%%%%%%%%%%%%%%%%%%%%%%%%
\section{Problem Formulation}\label{sec:problem}

In this work, our goal is to design a feedback controller for system~\eqref{eq:system} to reduce the occurence of unwanted events during the execution of the system. We propose a two step solution for this problem. In the first step, we identify the \textit{controllable causes} of the unwanted events over a labeled dataset of system traces. To generate the dataset, we simulate the system with a nominal (or random) controller and label the traces. In the second step, we synthesize a controller that avoids the causes found in the first step. 

We propose to identify the controllable causes of the unwanted events in the form of ptSTL formulas over the system's state and control variables. In particular, the generated formula will represent the unwanted events (label 1). In addition, it will be in a particular structure, i.e, controllable, so that the satisfaction of the formula can be avoided by a controller. Thus the number of the unwanted events can be reduced by the controller. We call the formula found in the first step as \textit{the cause formula}. The formula synthesis problem is formalized in Prob.~\ref{prob:formula}.

\begin{problem}[Finding a cause formula]\label{prob:formula} Given a control system~\eqref{eq:system}, a set of its labeled traces $\mathcal{D}$~\eqref{eq:dataset}, find a ptSTL formula in the following form 
\begin{equation}\label{eq:formulashape}
\Psi  \coloneqq \Phi_1 \vee \ldots \vee \Phi_p,  
\end{equation} 
\[where\]
\begin{equation}\label{eq:formulainside}
  \quad \Phi_i \coloneqq (\pastG_{[1,b_i]} u^j=c_i) \wedge ( \pastF_{[1,1]} \phi_{i} ) , 
  \end{equation}
  and $\phi_i$ is any ptSTL formula over $\{ x^0, \ldots, x^{n-1}\} \cup \{u^0, \ldots, u^{m-1}\}$ 
  such that the valuation of the formula along the traces mimics the labels, i.e. for any $(\mathbf{x}, \mathbf{l}) \in \mathcal{D}$,  whenever $l_k = 1$, $(\mathbf{x}, k)\models \Psi$. 
  \end{problem}
  
The combined cause formula~\eqref{eq:formulashape} is a disjunction of cause formulas in the form of~\eqref{eq:formulainside} that represents a controllable cause. The controllable cause identifies the unwanted events in the previous time step. The first part  ($\pastG_{[1,b]} u^j=c_i$)  of the formula can be violated with  the choice of a suitable control input in the previous time step, and the second part ($ \pastF_{[1,1]} \phi_{i}$) captures the cause over the state and the control variables. %  It consists of two parts: 1) the control formula ($\pastG_{[1,b]} u^j=c_i$) that can be violated with  the choice of a suitable control input and 2) the general formula ($ \pastF_{[1,1]} \phi_{i}$) whose satisfaction can be determined in the previous time step. 

 A formula that is satisfied at all of the time points with label $1$ (i.e. has no \textit{false negatives}) and that is violated at all other time points (i.e. has no \textit{false positives}) is an ideal formula. Due to the particular labeling process and the structure of the formula~\eqref{eq:formulashape} considered in Prob.~\ref{prob:formula}, it might not be possible to find such an ideal formula. In our solution, we aim at generating the \textit{best} formula representing the given dataset $\mathcal{D}$.  Any success measure for the binary classification task can be used to determine how good the formula is with respect to the given dataset. We use $F_\beta$-score to balance the false negatives and false positives. To generate the formula, we first sort all parametric formulas in the given form~\eqref{eq:formulainside} according to their operator counts. Then, iteratively perform parameter optimization and generate combined cause formulas as in~\eqref{eq:formulashape}. 
   
 % Thanks to the expressiveness of the temporal logics, such a formula can be found when complex formulas with large number of operators and large time bounds are considered. On the other hand, the formula found in this step will determine the complexity of the resulting controller, e.g., the memory and computational requirement of the controller. Therefore, it is desirable to generate a simple formula. In addition, it might not be possible to find such an ideal formula in the form~\eqref{eq:formulashape} considered in Prob.~\ref{prob:formula}. Therefore, we aim at finding the \textit{best} formula representing the given dataset.  NOT: bunu cdc icin ekleyebilitiz, control konferansinda elde edilen kontrolcunun ozelliklerinin vurgulanmasi iyi olur. 

 In the second step of our solution, we synthesize a controller that avoids \textit{the combined cause formula} $\Psi$ found at the first step.

 \begin{problem}[Controller synthesis]\label{prob:controllersynthesis} Given a control system~\eqref{eq:system} and a ptSTL formula $\Psi$ of the form~\eqref{eq:formulashape}, generate a finite memory feedback controller $\mathcal{U}: (\mathbb{X} \times \mathbb{U})^K \to \mathbb{U}$ such that the trajectory of the closed loop system violates $\Psi$ at each time step.
\end{problem}

The proposed controller as a solution to Prob.~\ref{prob:controllersynthesis}  evaluates the trace of the system against the general and control parts of each cause formula, and produce the control inputs accordingly. In particular, at time step $k$, a control input $ u \in \mathbb{U}$ that violates time-shifted formula $(\pastG_{[0,b-1]} u^j=c_i) \wedge \phi_{i} $ for each subformula $\Phi_i$ is generated to violate $\Psi$ at the next time step. 
%for each sub-formula $\Phi_i$~\eqref{eq:formulainside}, the controller evaluates the system trace against $\phi_{i}$  and $\pastG_{[0,b-1]} u^j=c_i$. If both are satisfied, $u^j_{k+1}$ is picked from $\mathbb{U}^j \setminus \{c_i\}$ to violate $\Phi_i$ at time step $k+1$.
%The controller avoids the satisfaction of the cause formula $\Phi$ found from Prob.~\ref{prob:formula}, thus it avoids each sub-formula. 
As these represent the unwanted events in the given dataset, the generated control strategy is expected to reduce the number of these events. Essentially, the stronger the causality between the control inputs and the unwanted events is, the better our solution is going to be, i.e., the larger number of unwanted events will be avoided by the synthesized controller.

 % \begin{remark}
 % In the course of avoiding the formation of some reasons of the unwanted events, the controller can cause other reasons to form. In such a case, to improve the solution one can repeat the steps of \textit{simulating, labeling, finding causes, and controller refinement}.
% \end{remark}

    \begin{example}\label{ex:examplifyingformulas} For the traffic system shown in Ex.~\ref{ex:trafficsystem}, a formula $\phi$ as in~\eqref{eq:formulainside} represents the continuous actuation of a road for $b-1$ time units while blocking the neighbouring road, and the general part captures other properties in the system. For example, blocking $x^3$ when there is more than $10$ vehicles would lead to satisfaction of~\eqref{eq:violationformula} at the next time step, which is captured in the following cause formula:
\[ \Phi = (\pastG_{[1,1]} u^0=0) \wedge ( \pastF_{[1,1]} x^3  > 10 )  \]
    \end{example}
   %  \textbf{Bu ornek cok aciklayici ama zorunda kalinirsa silinebilir belki.}

\section{Finding Controllable Causes}\label{sec:formula}

% !TEX root = cdc.tex

In this section, we present our solution for Prob.~\ref{prob:formula}. First, we define the set of all parametric cause formulas~\eqref{eq:formulainside} with a given operator count limit. Then, we use an iterative formula synthesis approach to generate $\Psi$~\eqref{eq:formulashape}. In particular, starting from $\Psi=false$, at each iteration, for each parametric cause formula $\Phi$, we synthesize parameters $v$ that optimize the valuation of the combined formula ($\Psi \vee \Phi(v)$), and choose the optimal one and concatenate it to $\Psi$ via disjunction. The iteration continues until a given cause or operator limit is reached or the improvement on the valuation is insufficient. % \blue{(irmak): improve/add information to this part.}\blue{(irmak):edited this part.}
% Then, by using this set, we define the set of all combined cause formulas~\eqref{eq:formulashape} up to a given number of causes. Here, the operator limit and the cause limit are the parameters that can be tuned for the considered system. Finally, we perform parameter optimization for each formula computed in the second step.
We use $F_\beta$ score for formula evaluation to balance the errors (e.g. false positives and false negatives).  

% In the parameter optimization stage, we use $F_\beta$ score to balance the errors (e.g. false positives and false negatives).   

A parametric ptSTL formula with $r$ operators is denoted as $\phi_{(oc=r)}$. Such a formula over the state and input variables of system~\eqref{eq:system} is recursively defined as follows:
\begin{align}
\phi_{(oc=r)} :=	& \phi_{(oc=r_1)} \vee \phi_{(oc=r_2)} \mid  \phi_{(oc=r_1)} \wedge \phi_{(oc=r_2)} \mid \nonumber \\ 
& \neg \phi_{(oc=r-1)}  \mid \pastF_{[a,b]} \phi_{(oc=r-1)} \mid  \pastG_{[a,b]}  \phi_{(oc=r-1)} \mid  \nonumber \\ & \phi_{(oc=r_1)} \since_{[a,b]} \phi_{(oc=r_2)}   \label{eq:parametric} \\
\phi_{(oc=0)} & := x^i < c \mid x^i > c \mid u^i = c \mid true \label{eq:parametric0}
\end{align} 
where $a,b$ and $c$ are parameters, and $r_1+r_2=r-1$. We use $ \Phi_{(oc=r)} $ to denote a parametric subformula~\eqref{eq:formulainside} with $r$ operators in the general formula part, which is defined as:
 \begin{align}\label{eq:parametricsubformula}
 \Phi_{(oc=r)} :=  (\pastG_{[1,b]} u^j=c) \wedge \pastF_{[1,1]} \phi_{(oc=r)},
 \end{align}

Note that the formula $(\pastG_{[1,b]} u^j=c ) \wedge (\pastF_{[1,1]} true)$ with $0$ operators in the general formula part is equivalent to $(\pastG_{[1,b]} u^j=c)$. This formula represents the case when the unwanted event occurred exclusively due to the choice of the control input.

 \begin{algorithm}[h]
\caption{Formula Search($\underline{oc}$ , $\overline{oc}$,$\overline{p}$, $\underline{val}$, $\mathcal{D}$, $\mathcal{P}$)}\label{alg:formulasearch}
\begin{algorithmic}[1]
\Require $\underline{oc}$: lower bound for the operator count, $\overline{oc}$: upper bound for the operator count, $\overline{p}$: upper bound for the number of subformulas, $\underline{val}$: lower bound for the improvement on the valuation, $\mathcal{D}$: a dataset as in~\eqref{eq:dataset}, $\mathcal{P} = \{ \mathcal{P}(\alpha) \mid \alpha \in \{a, b\} \cup \{c_{x,i} \mid i=0,\ldots,n-1\} \cup \{c_{u,i} \in i=0,\ldots,m-1\}\} $: a finite domain $\mathcal{P}(\alpha) $ for each parameter type. 

%\noindent
$\bigvee(\mathcal{L})$ is the formula obtained by concatenating all formulas in $\mathcal{L}$ with $\vee$, e.g.,  $\bigvee \left(\{\Phi_1, \Phi_2, \Phi_3\}\right) :=\Phi_1 \vee\Phi_2 \vee \Phi_3$.
\State $oc, {oc}_{last} \gets \underline{oc},$ $\quad \quad FS = \{false\}$ 
\State Compute $PF^{= oc}$  (set of parametric formulas as in~\eqref{eq:parametricsubformula})\label{line:formulas}% : the set of all parametric formulas in the form~\eqref{eq:parametricsubformula} with exactly $oc$ operators. \label{line:formulas}
\While {$oc \leq \overline{oc}$ \textbf{and} ${oc}_{last} - oc < oc_{lim}$  \textbf{and} $\mid FS \mid \leq \overline{p}$}\label{line:whilestart}
	\ForAll {$\Phi \in PF^{= \overline{oc}}$}\label{line:forstart}
		\State  $\Phi(v),  F_{\beta}(\bigvee(FS)\vee\Phi(v)) = ParameterOptimization(\mathcal{D}, \bigvee(FS) \vee \Phi, \mathcal{P})$\label{line:parameteroptimization}
	\EndFor\label{line:loopend}
	\State $\Phi(v)^*= \Phi(v)$ with the best $F_{\beta}(\bigvee(FS)\vee\Phi(v))$ \label{line:bestformula}
	\If  {$F_{\beta}(\bigvee(FS) \vee \Phi(v)^*) - F_{\beta}(\bigvee(FS)) \leq \underline{val}$ }  \label{line:valuationlimitcheck}
		\State $oc = oc + 1$ \label{line:incrementoc}
		\State Compute $PF^{= oc}$
	\Else \label{line:else}
		\State $FS = FS \cup \{\Phi(v)^*\} $ \label{line:addcause}
		\State ${oc}_{last} \gets oc$
	\EndIf
\EndWhile\label{line:endwhile}
\State \Return $\bigvee(FS)$\label{line:returnline}
\end{algorithmic}
\end{algorithm}

The proposed formula synthesis method is summarized in Alg.~\ref{alg:formulasearch}. In this algorithm, first, the current operator count $oc$ and the last used operator count ${oc}_{last}$ are set to the minimum operator count $\underline{oc}$. Then, the set of all parametric cause formulas $PF^{={oc}}$ with $oc$ operators (as defined in~\eqref{eq:parametricsubformula}) is computed~(line~\ref{line:formulas}). In the main loop from line~\ref{line:whilestart} through line~\ref{line:endwhile}, parameter optimization is performed for parametric cause formulas to generate the combined cause formula in an iterative fashion. In particular, initially, the set of optimized cause formulas, $FS$, includes only formula $false$ and parameter optimization is performed for each parametric formula in $PF^{={\underline{oc}}}$ (lines~\ref{line:forstart}-\ref{line:loopend}). The optimum among these, $\Phi(v)^*$, is selected. If the difference of the evaluations of $\Phi(v)^*$ and formula $false$ is greater than the given bound $\underline{val}$, $\Phi(v)^*$ is added to the set of optimized cause formulas (line~\ref{line:addcause}). Otherwise, considering that the valuation of the best cause formula with operator count $oc$ is below the limit, the operator count is incremented by one (line~\ref{line:incrementoc}) and $PF^{= oc}$ is computed. In the subsequent iterations of the main loop, for the parametric cause formulas $\Phi \in PF^{= oc}$, the parameter optimization is performed on the combined cause formula $\bigvee(FS) \vee \Phi$ (line~\ref{line:parameteroptimization}, $\bigvee(FS)$  is not parametric), and, again, the valuation of the combined formula, $F_{\beta}(\bigvee(FS) \vee \Phi(v))$, is used in formula selection (line~\ref{line:bestformula}). Thus, in both cases, the improvement achieved with the new cause formula $\Phi$~\eqref{eq:parametricsubformula} is considered. At each iteration of the main loop, either a new formula is added to $FS$ or the operator count is incremented. The loop terminates when 1) the operator count exceeds the given bound $\overline{oc}$, or 2) a new cause was not added for the last $oc_{lim}$ different operator counts, or 3) the given cause limit $\overline{p}$ is reached. $FS$ is the set of all optimized cause formulas, and the disjunction of these is returned (line~\ref{line:returnline}).

The \textit{ParameterOptimization} method takes a dataset $\mathcal{D}$~\eqref{eq:dataset}, a parametric ptSTL formula $\Phi$, and a domain for each parameter type $\mathcal{P}$ as input. It iterates through all possible valuations $v$ for the parameters that appear in $\Phi$. In particular, each parameter in $\Phi$ is considered to be different and the optimization method iterates through the product of the corresponding parameter domains, e.g., for formula $\pastG_{[1,b]} (u^j = c) \wedge \pastF_{[1,1]} \pastG_{[a,b]} (u^j = c)$ iterate through $ \mathcal{P}(b) \times \mathcal{P}(c_{u,j}) \times  \mathcal{P}(a) \times \mathcal{P}(b) \times  \mathcal{P}(c_{u,j})$. 
For each valuation $v$, it computes the formula $\Phi(v)$ as in Defn.~\ref{defn:definition1} and its fitness $F_{\beta}(\Phi(v), \mathcal{D})$. Finally, it returns the valuation with the highest score, e.g., $\text{arg}\max\{ F_{\beta}(\Phi(v), \mathcal{D}) \mid v \text{ is a valuation for } \Phi \}$.
%Then, $C^{\leq \overline{p}}(PF^{\leq \overline{oc}})$, the set of non-empty subsets of $PF^{\leq \overline{oc}}$ with up to $\overline{p}$ elements is computed, i.e. $p$-combinations of $PF^{\leq \overline{oc}}$ with $1 \leq p \leq \overline{p}$. Note that each $\mathcal{S} \in C^{\leq \overline{p}}(PF^{\leq \overline{oc}})$ is a set of parametric subformulas. In the loop given in line~\ref{line:loopstart} through line~\ref{line:loopend}, for each $\mathcal{S} \in C^{\leq \overline{p}}(PF^{\leq \overline{oc}})$, a parametric ptSTL formula $\Phi^\mathcal{S}$ in the form of~\eqref{eq:formulashape} is constructed by concatenating the subformulas with disjunction operator, and parameter optimization is performed to find the best parameter valuation for  $\Phi^\mathcal{S}$ over the given dataset $\mathcal{D}$. Finally, Alg.~\ref{alg:formulasearch} returns the cause formula $\Phi^\mathcal{S}(v) \coloneqq \Phi_1(v_1) \vee \ldots \vee \Phi_p(v_p)$ with the optimal fitness value $F_{\beta}(\Phi^\mathcal{S}(v), \mathcal{D})$ where operator counts of $\Phi_1, \ldots ,\Phi_p$ are upper bounded by $\overline{oc}$ and $p = | \mathcal{S}| \leq \overline{p}$. \\

\begin{example}\label{ex:formulasynthesis}
We apply the parameter synthesis algorithm on the dataset defined in Ex.~\ref{ex:trafficsystem} for the traffic system shown in Fig.~\ref{fig:traffic} ($n=5$, $m=2$). The parameter domains are defined as follows: $\mathcal{P}(c_{x,i}) = \{10, 15, 20, 25, 30, 35\}$ for $i=0,1,2$, $\mathcal{P}(c_{x,i}) = \{5,10,15\}$ for $i=3,4$, $\mathcal{P}(c_{u,i}) = \{0,1\}$ for $i=0,1$,  $\mathcal{P}(a) = \{0, 1, 2, 3, 4, 5\}$, and $\mathcal{P}(b) =  \{0, 1, 2, 3, 4, 5\}$. We run Alg.~\ref{alg:formulasearch} with parameters $\underline{oc}=0$, $\overline{oc}=\infty$, $oc_{lim}=2$, $\overline{p}=\infty$. Note that no upper bound is set on the number of cause formulas or on the number of operators. The algorithm only terminates when no formula from $PF^{oc}$ satisfies the valuation condition for the last 2 operator counts (line~\ref{line:valuationlimitcheck}). The resulting formulas for $\underline{val} = 0.1$ and $\underline{val} = 0.01$ are:
\begin{align*}
\Psi^{*}_{0.1}(v) &=  (\pastG_{[1,3]} u^0 = 0 ) \vee  (\pastG_{[1,3]} u^0 = 1 ) \\ % ( A 1 3 ( x5 = 0.0 ) ) | ( A 1 3 ( x6 = 0.0 ) )
\Psi^{*}_{0.01}(v) &=  (\pastG_{[1,3]} u^0 = 0 ) \vee  (\pastG_{[1,3]} u^0 = 1 ) \vee \\
& ( (\pastG_{[1,1]} u^0 = 0 ) \wedge (\pastF_{[1,1]} ( x^3 > 10 ) ) ) \vee \\
& ( (\pastG_{[1,1]} u^1 = 0 ) \wedge (\pastF_{[1,1]} ( x^4 > 10 ) ) )
\end{align*}
As expected, smaller valuation limit $\underline{val}$ resulted in longer formulas with higher fitness: $F_{0.3}(\Psi^{*}_{0.1}(v)) = 0.92$, $tp$ of $\Psi^{*}_{0.1}(v)$ is $427$, whereas  $F_{0.3}(\Psi^{*}_{0.01}(v)) = 0.99$, and $tp$ of $\Psi^{*}_{0.01}(v)$ is $785$. The $fp$ for both of the formulas is $0$ (see~\eqref{eq:tpfp}). Essentially, when $\underline{val} = 0.1$, only the formulas that improve the valuation significantly is added to the optimized formula set. The computation took $112$ and $260$ seconds for  $\Phi^{*}_{0.1}(v)$, and $\Phi^{*}_{0.01}(v)$, respectively, on a laptop with $4GB$ memory and $1.6$ GHz Intel Core $i5$ processor. Formula $\Phi^{*}_{0.1}(v)$ shows that blocking link $0$ or link $3$ for three consecutive time units will cause congestion (see~Fig.\ref{fig:traffic}). The first part of the formula $\Phi^{*}_{0.01}(v)$ found in the second case is the same as $\Phi^{*}_{0.1}(v) $. The remaining part indicates that if the link 3 or link 4 is blocked when there are more than $10$ vehicles on them, there will be congestion in the next time step. A sample system trace, the label of it and the valuation of formula $\Psi^{*}_{0.01}(v)$ along the trace are shown in Fig.~\ref{fig:simulation}. In addition, we validated $\Psi^{*}_{0.01}(v)$  on $10$ other randomly generated datasets as defined in Ex.~\ref{ex:trafficsystem} (using a random controller). Similar to this example, each dataset has $20$ traces with length $100$. The average number of positive labels in a dataset is  $939.7$ (congestion for  $47\%$ of the time in average). The average number of true positives and false positives for $\Psi^{*}_{0.01}(v)$ over these datasets are $803.9$ and $0.4$, respectively.  These experiments show that the formula correctly identifies the causes for $85\%$ of the unwanted events, and rarely marks a normal event as bad. The minimality of the false positive rates is a result of using $F_{0.3}$-score. \end{example}

\begin{figure}[h]
\centering
\includegraphics[width=8.5cm]{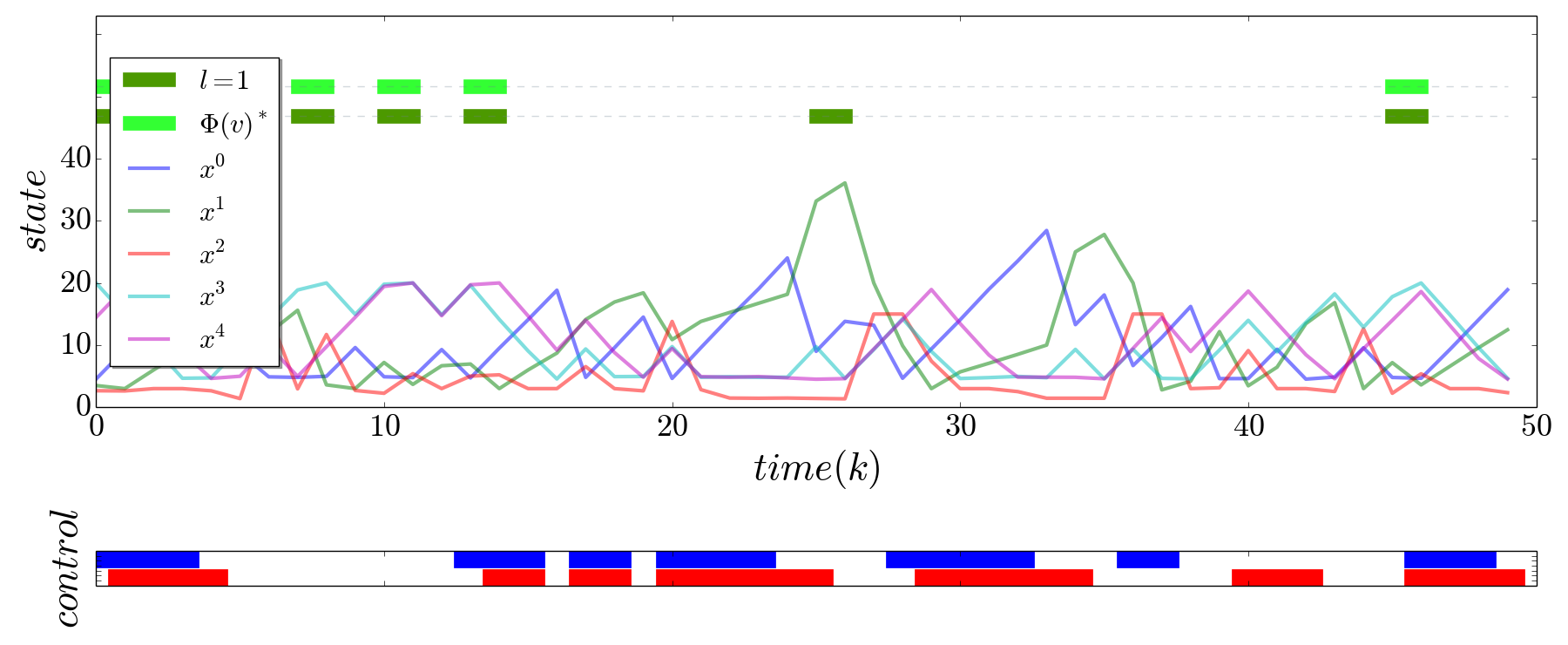}
% \vspace{-0.5cm}
\caption{A sample system trace, its label and valuation of $\Psi^{*}_{0.01}(v)$. In the bottom plot, the blue line shows that control $u^0$ is $1$ and the red shows that control $u^1$ is $1$ .}\label{fig:simulation}
\end{figure}

\textbf{Complexity.} The time complexity of Alg.~\ref{alg:formulasearch} is characterized by the operator count limit $\overline{oc}$, the cause limit $\overline{p}$ and the size of the parameter domain $s$, where $s = \max_{ \mathcal{P}(\alpha) \in \mathcal{P}} |\mathcal{P}(\alpha)|$. The number of parameters $\#(\phi)$ of a parametric ptSTL formula $\phi$ with $r$ operators is upper bounded by $3r + 1$, where the upper bound is attained when the formula has  $r$ nested $\since$ (since) operators. The complexity of the parameter optimization method is characterized by the number of fitness value computations, which is exponential in the number of parameters of the input formula $\phi$, $\#(\phi)$, and polynomial in $s$, e.g., $O(s^{\#(\phi)})$ due to the grid search. At each iteration, the parameter optimization is performed for each formula in $PF^{= {oc}}$. The number of parametric formulas with ${oc}$ operators, i.e., $| PF^{= {oc}} |$, is exponential in $oc$, and polynomial in $n$ (system dimension) and $m$ (control dimension). In particular, $| PF^{= {0}} | = m \times (2n+m+1)$ is the number of parametric formulas with 0 operators (see~\eqref{eq:parametric0}) and $| PF^{= {oc}} |$ is in $O( | PF^{= {0}} | ^{{oc}})$. The algorithm terminates with in $\overline{oc} + \overline{p}$ iterations. Even though the parameter optimization method is exponential in the number of parameters, due to the iterative nature of the algorithm, at most $3\overline{oc} + 3$ parameters are optimized simultaneously. % In addition, the computation is in order of seconds for the considered examples.  

In Alg.~\ref{alg:formulasearch}, a grid search is performed for each candidate cause formula via the \textit{ParameterOptimization} method. Alternative optimization methods such as formula synthesis via data mining~\cite{acc2019} can be employed to improve the computational efficiency. In addition, monotonicity based efficient parameter synthesis methods can be adapted to the considered dataset when a monotone optimization criteria, such as the number of true positives, is used instead of $F_\beta$ score~\cite{miningjournal}. As the particular parameter optimization method is not the focus of this work, we used a simple search algorithm.

\section{Controller Synthesis}\label{sec:synthesis}

% !TEX root = cdc.tex

In this section, we present our solution for Prob.~\ref{prob:controllersynthesis}. The proposed feedback controller generates a control input $u \in \mathbb{U}$ at each time step to avoid the satisfaction of the cause formula $\Psi=\Phi_1 \vee \ldots \vee \Phi_p$~\eqref{eq:formulashape} found in the first step. Note that, at time step $k$, $\Psi$ is violated if and only if each subformula $\Phi_i$ for $ i=1,\ldots,p$ is violated.

 \begin{algorithm}
\caption{Controller($\Psi$, $\mathbf{x}$, $x_k$)}\label{alg:controller}
\begin{algorithmic}[1]
\Require $\Psi=\Phi_1 \vee \ldots \vee \Phi_p$: the cause formula, each $\Phi_i = ( \pastG_{[1,b_i]} u^j=c_i) \wedge ( \pastF_{[1,1]} \phi_{i} )$~\eqref{eq:formulainside},
$\mathbf{x} = (x_{k-\mathfrak{l}(\Phi)+ 1} , u_{k-\mathfrak{l}(\Phi)+ 1}), \ldots, (x_{k-1}, u_{k-1})$: partial trace of length $\mathfrak{l}(\Phi) - 1$, $x_{k}$: state of the system at time~$k$
\Ensure $u_{k} \in \mathbb{U}$ and $(x_{k-\mathfrak{l}(\Phi)+ 1} , u_{k-\mathfrak{l}(\Phi)+ 1}), \ldots, (x_{k}, u_{k})$ violates $\Phi$ at time $k$. 
\State $\mathbb{U}^{cand} = \mathbb{U}$
\While{$\mathbb{U}^{cand} \neq \emptyset $}\label{line:randomControlS}
\State $u_k = Random(\mathbb{U}^{cand})$
% \State $\mathbf{x} = (x_{k-\mathfrak{l}(\Phi)+ 1} , u_{k-\mathfrak{l}(\Phi)+ 1}), \ldots, (x_{k-1}, u_{k-1}), (x_{k}, u_k)$
\If{$ ((\mathbf{x}, (x_k, u_k) ),  k ) \models ( \pastG_{[0,b_i-1]} u^j=c_i) \wedge\phi_{i}  $  for some $\Phi_i$ from $\Psi$}
\State $\mathbb{U}^{cand} = \mathbb{U}^{cand} \setminus \{u_k\}$\label{line:removeControl}
\Else $\quad$ \Return $u_k$
\EndIf
\EndWhile\label{line:randomControlE}
\State \Return $Random(\mathbb{U})$ \label{line:randomControl}
\end{algorithmic}
\end{algorithm}

The controller is presented in Alg.~\ref{alg:controller}. It takes the cause formula $\Psi$ and the partial trace $\mathbf{x}$ of the system that is required to compute the satisfaction of $\Psi$ at time step $k$ as inputs. In line~\ref{line:randomControlS} through line~\ref{line:randomControlE}, the controller randomly picks a control input $u$ from $\mathbb{U}$, appends $(x_k, u)$ to the partial trace $\mathbf{x}$, then for each sub-formula $\Phi_i = (\pastG_{[1,b]} u^j=c) \wedge \pastF_{[1,1]} \phi$, the controller evaluates the formula shifted by a time step, $(\pastG_{[0,b-1]} u^j=c) \wedge \phi$, along the partial trace. $u$ is removed from the candidate set of controls (line ~\ref{line:removeControl}) if the shifted-formula is satisfied for a cause formula $\Phi_i$. Otherwise, $u$ guarantees that each cause formula, thus $\Psi$~\eqref{eq:formulashape}, will be violated at time step $k+1$, and the controller generates $u$. 

If the candidate control set becomes empty, the controller generates a control input randomly (line~\ref{line:randomControl}). In this case, the control input that satisfies the minimum number of cause formulas can be generated with additional bookkeeping.In the following proposition, we establish the sufficient conditions under which the controller guarantees that $\neg \Psi$ is satisfied along the system trace.

% If both are satisfied, $c$ is removed from the candidate values for the control input $u^j$ (line~\ref{line:removecontrol}) to ensure that $\Phi_i$ will be violated at time $k$. In particular,  $(\pastG_{[0,b]} u^j=c)$ will be violated at time $k$, which guarantees that $\Phi_i$ will be violated. Finally, a randomly chosen control input is generated from the candidate sets (line~\ref{line:generatecontrol}). 

% In line~\ref{line:subformulaloopstart} through line~\ref{line:subformulaloopend}, for each sub-formula $\Phi_i = (\pastG_{[0,b]} u^j=c) \wedge \pastF_{[1,1]} \phi$, the controller evaluates $(\pastG_{[0,b-1]} u^j=c)$ and $\phi$ along the partial trace. If both are satisfied, $c$ is removed from the candidate values for the control input $u^j$ (line~\ref{line:removecontrol}) to ensure that $\Phi_i$ will be violated at time $k$. In particular,  $(\pastG_{[0,b]} u^j=c)$ will be violated at time $k$, which guarantees that $\Phi_i$ will be violated. Finally, a randomly chosen control input is generated from the candidate sets (line~\ref{line:generatecontrol}). 

\begin{proposition}\label{prop:controller} Given a ptSTL formula $\Psi$~\eqref{eq:formulashape}, control system~\eqref{eq:system},the state $x_k$, a partial trace \[\mathbf{x} = (x_{k-\mathfrak{l}(\Phi)} , u_{k-\mathfrak{l}(\Phi)}), \ldots, (x_{k-1}, u_{k-1}),\] the control $u_k$ generated by Alg.~\ref{alg:controller} guarantees that $\Psi$ is violated at time step $k$ when $b_i \geq 2$ for each cause formula $\Phi_i$ and $| \mathbb{U}^j | \geq 2$ for each $j=0,\ldots,m-1$. 
\end{proposition}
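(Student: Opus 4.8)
The plan is to show that, under the two hypotheses, Algorithm~\ref{alg:controller} always exits its while loop through the \textbf{Else} branch (returning some $u_k$) rather than falling through to the unconditional random choice at line~\ref{line:randomControl}, and that any control returned from the \textbf{Else} branch violates every subformula. I would first record the semantic identity that the controller exploits: for each $\Phi_i = (\pastG_{[1,b_i]} u^j=c_i) \wedge (\pastF_{[1,1]} \phi_i)$, unwinding the definitions of $\pastG$, $\pastF$, and $\since$ shows that the time-shifted formula $(\pastG_{[0,b_i-1]} u^j=c_i) \wedge \phi_i$, evaluated at time $k$ on the trace extended by $(x_k,u_k)$, is exactly the condition that makes $\Psi$ satisfied at the step targeted by the controller. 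The shifted global conjunct imposes $u^j_{j'}=c_i$ for every $j'\in\{k-b_i+1,\ldots,k\}$, a window that contains the current, free index $k$ and, crucially because $b_i\ge 2$, the previous, already-fixed index $k-1$. Since the partial trace starts at $k-\mathfrak{l}(\Phi)$ and $\mathfrak{l}(\Phi)\ge \max_i b_i \ge 2$, the index $k-1\ge 0$ is a genuine, determined past value.

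With this identity in hand, the heart of the argument is to exhibit a single control that simultaneously defeats the \emph{control conjunct} of every subformula, so that each conjunction — and hence $\Psi=\bigvee_i\Phi_i$~\eqref{eq:formulashape} — is violated regardless of the general parts $\phi_i$. I would fix, for each coordinate $j$, a value $\hat u^j_k \in \mathbb{U}^j \setminus \{u^j_{k-1}\}$, nonempty precisely because $|\mathbb{U}^j|\ge 2$, and set $\hat u_k = (\hat u^0_k,\ldots,\hat u^{m-1}_k)$. The key claim is that $\hat u_k$ violates $\pastG_{[0,b_i-1]} u^{j_i}=c_i$ at time $k$ for every $i$, which I would prove by a two-case split on the frozen past value $u^{j_i}_{k-1}$: if $u^{j_i}_{k-1}\ne c_i$, the window already contains the point $k-1$ (in the window since $b_i\ge 2$) where the required equality fails; if $u^{j_i}_{k-1}= c_i$, then by construction $\hat u^{j_i}_k\ne u^{j_i}_{k-1}=c_i$ breaks the equality at the current point $k$, which is also in the window. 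Either way the global-equality conjunct fails, so the shifted conjunction is violated at $k$.

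Finally I would translate this into a statement about the loop. Because $\hat u_k$ never triggers the \textbf{If} test (no shifted subformula is satisfied on it), it is never discarded at line~\ref{line:removeControl}; hence the while loop cannot empty $\mathbb{U}^{cand}$ without first selecting a control on which the \textbf{If} test fails and returning it through the \textbf{Else} branch. Any control returned there satisfies, by the test itself, that no shifted subformula holds, i.e. every $\Phi_i$~\eqref{eq:formulainside} is violated at the targeted step, so $\Psi$ is violated as claimed. The argument is indifferent to the concrete structure of the general parts $\phi_i$, since violation is achieved entirely through the control conjunct.

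The main obstacle — and the place where both hypotheses enter — is the case analysis establishing that a single $\hat u_k$ kills every control conjunct at once. The delicate point is the bookkeeping on the $\pastG$ window: one must verify that $b_i\ge 2$ is exactly what forces the index $k-1$ into $\{k-b_i+1,\ldots,k\}$, so that any subformula whose control part is not already falsified by the frozen past must demand $u^{j_i}=u^{j_i}_{k-1}$. This collapses the set of forbidden current values on each coordinate $j$ to the single value $u^j_{k-1}$, and $|\mathbb{U}^j|\ge 2$ then guarantees an admissible alternative; without $b_i\ge 2$ two subformulas on the same coordinate could demand distinct values simultaneously and exhaust a binary $\mathbb{U}^j$, exactly the failure the hypotheses are designed to exclude.
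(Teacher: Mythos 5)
Your proposal is correct and follows essentially the same route as the paper: both arguments construct a control differing from $u_{k-1}$ in every coordinate (possible since $|\mathbb{U}^j|\ge 2$), observe that $b_i\ge 2$ places both indices $k$ and $k-1$ inside the $\pastG_{[0,b_i-1]}$ window so the equality conjunct must fail at one of them, and conclude that such a control survives the loop and is returnable. Your explicit two-case split on $u^{j_i}_{k-1}$ and the remark that the candidate set can never be emptied merely spell out steps the paper compresses into one line.
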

\begin{proof} Let $\textbf{U}_k = \mathbb{U}^0 \setminus \{u_{k-1}^0\} \times \ldots \times \mathbb{U}^{m-1} \setminus \{u_{k-1}^{m-1}\}$ where $u_{k-1}= [ u_{k-1}^0 , \ldots , u_{k-1}^{m-1}]$ is the control input of time step $k-1$. Since $| \mathbb{U}^j | \geq 2$ for each $j$, $\textbf{U}_k \neq \emptyset$. Consider $u_k \in \textbf{U}_k$ and a shifted cause formula  $\Phi^s_i = \pastG_{[0,b_i-1]} u^j=c_i \wedge \phi_i$. Since $b_i-1 \geq 1$ and $u_k^j \neq u_{k-1}^j$, it holds that $((\mathbf{x}, (x_k, u_k) ), k)\not \models \Phi^s_i$. Note that $({(}\mathbf{x}, (x_k, u_k) {), k})\not \models \Phi^s_i$ implies that $({(}\mathbf{x}, (x_k, u_k),(x_{k+1}, u_{k+1}) ),k+1 )\not \models \Phi_i$ for any $x_{k+1} \in \mathbb{X}$ and $u_{k+1} \in \mathbb{U}$. Observe that $\textbf{U}_k \subset \mathbb{U}$, thus $u_k$ can be generated by Alg.~\ref{alg:controller}.
Consequently, formula $\Psi$ is violated at each time step since each cause formula $\Phi_i$ is violated.
% Consider a sub-formula $\Phi_i = (\pastG_{[0,b_i]} u^j=c_i) \wedge \pastF_{[1,1]} \phi$. Assume that $\phi$ or $(\pastG_{[0,b_i-1]} u^j=c_i)$ is violated at time $k-1$ along the partial trace. In this case, the system trace violates $\Phi_i$ independent of the control $u_k$ or the system state $x_k$ since it is not possible to satisfy either $\pastG_{[0,b_i]} u^j=c_i$ or $\pastF_{[1,1]} \phi$. Now assume that both  $\phi$ and $(\pastG_{[0,b_i-1]} u^j=c_i)$ are satisfied at time step $k-1$. A control input $u^j_k \in \mathbb{U}^{cand,j}$ guarantees that $(\pastG_{[0,b_i]} u^j=c_i)$, hence $\Phi_i$, is violated at time step $k$ since $c_i \not \in \mathbb{U}^{cand,j}$ (see line~\ref{line:removecontrol}). Notice that $\mathbb{U}^{cand,j}$ is not empty since the condition in line~\ref{line:formulacheck} can not hold for two sub-formulas over the same input variable with different constants. In other words, for two subformulas $\Phi_i = (\pastG_{[0,b_i]} u^j=c_i) \wedge \pastF_{[1,1]} \phi_i$ and $\Phi_l = (\pastG_{[0,b_l]} u^j=c_l) \wedge \pastF_{[1,1]} \phi_l$, if both $\pastG_{[0,b_i-1]} u^j=c_i$ and $\pastG_{[0,b_l-1]} u^j=c_l$ are satisfied at time step $k-1$, then $c_i$ is equal to $c_l$. Note that the time bound $b$ in the control formula template $\pastG_{[0,b]} u=c$ is strictly greater than 0. Consequently, at most $1$ control choice is eliminated for each control input $u^j$, and $| \mathbb{U}^{j} |  - 1  \leq | \mathbb{U}^{cand,j} | $. Finally, formula $\Phi$ is violated since each subformula is violated.
\end{proof}

% The proposed feedback control strategy guarantees that $\Phi$ is violated at each time step if the candidate control set, $\mathbb{U}^{cand} = \mathbb{U}^{cand,0} \times .... \times \mathbb{U}^{cand,m-1}$ is not empty. 
 
%  Prop.~\ref{prop:controller} shows that the controller guarantees that $\Phi$ is violated at each time step. As $\Phi$ represents the causes of the unwanted events, the resulting controller reduces the frequency of the unwanted events. This is illustrated in Ex.~\ref{ex:controller}.

\begin{example}\label{ex:controller}
We simulated the traffic system from Ex.~\ref{ex:trafficsystem} in closed loop with the controller generated from formula $\Psi^{*}_{0.01}(v)$ (see Ex.~\ref{ex:formulasynthesis})as in Alg.~\ref{alg:controller}. The congestion percentage is dropped to $7.5\%$ from $46\%$ (in average). 
\end{example}

The developed framework identifies the controllable causes of the unwanted events from the labeled dataset, and reduces the occurrences of the labeled events via controller synthesis as illustrated in Ex.~\ref{ex:formulasynthesis} and Ex.~\ref{ex:controller}. An insufficient reduction on the unwanted events can be analyzed in three aspects when there is a high correlation between the unwanted events and the control inputs. 1) The complex causes can not be explained with limited number of operators ($\overline{oc}$). 2) The number of different causes yielding the labelled events is higher than the bound used in the formula synthesis ($\overline{p}$). 3) Avoiding formula $\Psi$ results in formation of new causes. For the first and the second one, the parameters of the formula synthesis algorithm can be increased. However, due to the time complexity of Alg.~\ref{alg:formulasearch}, increasing both of them might not be practical. The second and third issues can be solved via iterative applications of the formula generation and controller synthesis steps, which is presented in the next section.

% \begin{example}{(Irmak) bu ornegi de guncellemek lazim, 10 datasete gerek yok} \blue{(irmak) guncelledim}
% We generated $10$ datasets as defined in Ex.~\ref{ex:trafficsystem} (using a random controller). As in the running example, each dataset has $20$ traces with length $100$. The average number of positive labels in a dataset is $919$ (congestion for $46\%$ of the time in average). The average number of true positives and false positives for $\Phi_2(v)^*$ over these datasets are $785$ and $0$, respectively. These experiments show that the formula correctly identifies the causes for $85\%$ of the unwanted events, and never marks a normal event as bad. The minimality of the false positive rates is a result of using $F_{0.3}$-score. Therefore, the corresponding controller does not restrict the system unnecessarily, which could result in formation of new causes. % Furthermore, it will particularly be useful in the controller refinement part. 
% \end{example}

\subsection{Controller Refinement}

The iterative algorithm to reduce the number of unwanted events via cause identification and controller synthesis is summarized in Alg.~\ref{alg:iterativeapproach}. In addition to the parameters of Alg.~\ref{alg:formulasearch}, Alg.~\ref{alg:iterativeapproach} takes as input the system $S$ and a function over the state and control variables of the system to mark the unwanted events. Initially, the combined cause $\Psi$ is set to $false$ and a labeled dataset is generated by simulating the system in closed loop with a random control generator. Then, iteratively, a formula $\Phi$ is generated from the dataset $\mathcal{D}$ with Alg.~\ref{alg:formulasearch}, it is added to the previously identified causes $\Psi$ via disjunction, a controller $\mathcal{U}$ is synthesized from the new cause  formula $\Psi$ with Alg.~\ref{alg:controller}, and finally a new dataset is constructed by simulating the system in closed loop with the new controller $\mathcal{U}$. This iterative process continues until the rate of the unwanted events drops below a predefined limit $B$. % \blue{Note that the resulting formula $\Psi$ is in the form of~\eqref{eq:formulashape}, since each $\Phi$ is in this form and $\Psi$ is constructed by concatenating  them with disjunction.
% \textbf{mavi olan yer direk silinebilir, ustteki paragraf da belki kisaltilabilir.}
% }

% \blue{(irmak) I added $\mathcal{D}$ as an input below. It is necessary, right?}
% \blue{(irmak)asagida $\Phi'$yi kullanmisiz, daha once ayni notation'i shifted formula gostermek icin kullanmisiz controllerda. Yeni notationla mi yazsak?}

 \begin{algorithm}
\caption{ControllerSynthesis($S$, $g^{viol}$, $\overline{oc}$, $\underline{oc}$, $\overline{p}$, $\underline{val}$, $\mathcal{P}$)}\label{alg:iterativeapproach}
\begin{algorithmic}[1]
\Require $S$ is a control system as in~\eqref{eq:system}, $g^{viol}: \mathbb{R}^n \times \mathbb{U} \to \{0,1\}$ is function generating labels for the unwanted events, $\overline{oc}$, $\underline{oc}$, $\overline{p}$, $\underline{val}$, and $\mathcal{P}$ are as in Alg.~\ref{alg:formulasearch}
\Ensure $\mathcal{U} : (\mathbb{R}^n \times \mathbb{R}^m)^K \to\mathbb{U} $ is a feedback control strategy  minimizing $\sum_{i=0}^N  g^{viol} (x_i, u_i)$ along the traces $\mathbf{x}=(x_0, u_0), \ldots, (x_N, u_N)$ of the closed loop system. 
\State $\Psi = False$
\State Generate $\mathcal{D}$, simulate $S$ by randomly choosing controls and label the traces with $g^{viol}(\cdot,\cdot)$
\Repeat
\State $\Phi$ = FormulaSearch($\overline{oc}$, $\underline{oc}$, $\overline{p}$, $\underline{val}$, $\mathcal{D}$, $\mathcal{P}$).
\State $\Psi = \Psi \vee \Phi$
\State Define $\mathcal{U}: Controller(\Psi, \mathbf{x},x_k)$ as in Alg.~\ref{alg:controller}.
\State Generate $\mathcal{D}$: simulate $S$ in closed loop with $\mathcal{U}$ and label the traces with $g^{viol}$
\Until $\frac{\sum_{(\mathbf{x},\mathbf{l}) \in \mathcal{D}} \sum_{i=1,\ldots,N}  l_i } {N \times |\mathcal{D}| } \leq B$
\end{algorithmic}
\end{algorithm}

\begin{example}
We run the iterative synthesis algorithm (Alg.~\ref{alg:iterativeapproach}) for the traffic system from Ex.~\ref{ex:trafficsystem} with different values for $\underline{oc}$, $\overline{oc}$, $\overline{p}$, $\underline{val}$. In each experiment, the parameter domain given in Ex.~\ref{ex:formulasynthesis} is used. % Each experiments started with the same dataset that was used in the running example. 
In all of the experiments, we set $B$ to $0$ so that the refinement loop terminates only when the congestion is avoided. The total number of labeled data points, $\mathcal{V}_i$, in dataset $\mathcal{D}$ for each iteration $i$ and  the total computation time are reported in Table~\ref{tab:iterative1}. The parameters used in each experiment are shown in Table~\ref{tab:iterativeParameters}. Each experiment is started with the same dataset that was used in the running example, e.g., $\mathcal{V}_0$ is the violation count in this dataset. In each case the proposed iterative controller synthesis algorithm is able to generate a control strategy that avoids the unwanted events. 

In the first two experiments, no limit is set for the number of operators and the number of causes as in Ex.~\ref{ex:formulasynthesis}. In the first experiment, due to the low valuation limit, detailed cause formulas with high operator counts are generated by Alg.~\ref{alg:formulasearch}. While a controller that avoids the congestion is synthesized in fewer iterations, due to the complexity of parameter synthesis, the computation takes longer. In the second experiment, the valuation limit is increased to $0.1$. As explained in Ex.~\ref{ex:formulasynthesis}, cause formulas with fewer number of operators are generated. The total computation time is reduced while the number of iterations is increased. The parameters used in the last experiment enforce formula synthesis algorithm to generate a single cause with operator count $0$, the general formula $\phi_i$ is in the form of~\eqref{eq:parametric0}. Note that $\underline{val}$ is ineffective in this case. The refinement algorithm is able to reduce the violation count to $0$ in seven iterations. The parameter optimization is performed for at most $3$ parameters ($oc=0$), which results in short computation time. 
% The results are reported in Table~\ref{tab:iterative} for 7 iterations.  In the first row of the table,  results of  (${oc}_{min}=0$, ${oc}_{max}=INTMAX$, $p_{max}=INTMAX$, ${val}_{min} = 0.01$) is reported. All inputs of this example are given very flexibly. Even though this refinement finds very detailed causes (the exact cause seen in ..., which explains $86\%$ of label 1s, and thus reduces the violations in relatively few iterations, it is time inefficient. This refinement took $630$ seconds to halt. This choice of parameters can be used if one needs detailed cause formulas at each iteration (for example, if the prevention of bulk causes at each iteration causes new violations) and time is a lesser concern.\\
% In the second row, results of  (${oc}_{min}=0$, ${oc}_{max}=INTMAX$, $p_{max}=INTMAX$, ${val}_{min} = 0.1$) is reported. In this example, we did not want to restrict the operator count or cause number, however, we wanted to keep ${val}_{min} $ higher to restrict the time needed for the operation. This refinement took $320$ seconds to halt.\\ In the third row, results of (${oc}_{min}=0$, ${oc}_{max}=0$, $p_{max}=1$) is reported. In this example, we forced formula search algorithm to give us one formula with $oc=0$ at each iteration. Note that ${val}_{min}$ is ineffective in this case. This refinement took $115$ seconds to halt.
\end{example}

\begin{table}[h!]
\begin{center}
 \begin{tabular}{|c c c c c c c c c c|}
 \hline
 $Ex \# $ & time & {$\mathcal{V}_0$}  & {$ \mathcal{V}_1 $} & {$\mathcal{V}_2 $}   & {$\mathcal{V}_3$} & {$\mathcal{V}_4$} & {$\mathcal{V}_5$} & {$\mathcal{V}_6$} & {$\mathcal{V}_7$} \\ [0.5ex] 
 \hline\hline
  \#1 & 630sec & {911} & {143} & {23} &{5} & {2} & {0} & - & -\\ 
 \hline
  \#2 & 320sec & 911 & 530 & 265 & 5 & 6 &  2 & 0 & -\\ 
 \hline
 \#3 & 115sec & 911 &617 & 149 & 80 & 70 & 48 & 10 & 0  \\
 \hline
\end{tabular}
\end{center}
\caption{Violation counts after each iteration of Alg.~\ref{alg:iterativeapproach}}\label{tab:iterative1}
\end{table}

\begin{table}[h!]
\begin{center}
 \begin{tabular}{|c c c c c|}
 \hline
 $Ex \# $ & $\underline{oc}$ & $\overline{oc}$  &  $\overline{p} $  & $\underline{val}$ \\ [0.5ex] 
 \hline\hline
  \#1 & 0 & $\infty$ & $\infty$ & 0.01 \\ 
 \hline
  \#2 & 0 & $\infty$ & $\infty$ & 0.1 \\ 
 \hline
 \#3 & 0 & 0  & 1 & 0 \\
 \hline
\end{tabular}
\end{center}
\caption{Parameters of Alg.~\ref{alg:iterativeapproach}}\label{tab:iterativeParameters}
\end{table}

% \begin{example} We run Alg.~\ref{alg:iterativeapproach} on the system described in Ex.~\ref{ex:basicsystem}. In the first iteration, the formula shown in the first row of Table~\ref{tab:basic} is found. In the second iteration, formula $\Phi' = $ is found. The algorithm terminates after the second iteration since the resulting controller guarantees that $x$ does not exceed $5$, hence $v_2 = 0$. \end{example}

% 875 
% {show the results on an example, initial violation ratio and the resulting violation ratio after each iteration. V 0 875} $\mathcal{V}(\mathcal{D}_i ): $Violation count on the dataset $\mathcal{D}_i$ generated by the controller on $i^{th}$ iteration.

\section{Case Study}\label{sec:synthesis}

\subsection{Congested Traffic System}

% , and increase each turn ratio by $0.1$ and the decrease in the outflow
In this case study, we increase the exogenous flow limit of link $0$ to $10$ of the traffic system defined in Ex.~\ref{ex:trafficsystem}. All other parameters are kept the same. Due to the increase in the inflow, the average congestion is reached to $79\%$. In this example, we used the abstraction based synthesis approach~\cite{coogan2016traffic} to generate a control strategy to avoid congestion (defined in~\eqref{eq:violationformula}). The synthesis problem is infeasible even when the capacity of each link is partitioned into regions of length $1$ (i.e. the size of the abstraction is $40^3 \times 20^2$).

We run Alg.~\ref{alg:iterativeapproach} on this system with parameters $\overline{oc}=0$, $\underline{oc}=0$, $\overline{p}=1$, and $\underline{val}=0$.  For each iteration, the total number of labeled data points $\mathcal{V}_i$, the formula $\Phi_i$ generated by Alg.~\ref{alg:formulasearch} and its $tp$ and $fp$ count are shown in Table~\ref{tab:congested}.

\begin{table}[h!]
\begin{center}
 \begin{tabular}{|c| c | c | c| c| }
 \hline
 $i $ & {$\mathcal{V}_i$} & $\Phi_i$  & $tp$ & $fp$ \\ [0.5ex] 
 \hline\hline
  1 & 1591 & $ (\pastG_{[1,2]} u^0 = 1 ) \wedge (\pastF_{[1,1]} ( x^0 > 20 ) )  $ & 405 & 0 \\ 
 \hline
  2 & 1549 & $ (\pastG_{[1,2]} u^0 = 0 ) \wedge (\pastF_{[1,1]} ( x^3 > 10 ) )  $ & 576 & 0 \\ 
\hline
  3 & 1557 & $ (\pastG_{[1,2]} u^1 = 0 ) \wedge (\pastF_{[1,1]} ( u^1= 0 ) )  $ & 465 & 19 \\ 
\hline
  4 & 1763 & $ (\pastG_{[1,2]} u^1 = 1 ) \wedge (\pastF_{[1,1]} ( u^1 = 1 ) )  $ & 599 & 3 \\ 
\hline
  5 & 2 & {-}& {-} & {-} \\ 
\hline\end{tabular}
\end{center}
\caption{Congested traffic system: Violation counts and the optimized cause formulas for each iteration of Alg.~\ref{alg:iterativeapproach}.}\label{tab:congested}
\end{table}

The resulting formulas show that even though the valuation of the formula synthesized in Alg.~\ref{alg:formulasearch} is high (see $tp$ and $fp$), the number of violations in the resulting system may not decrease ({$\mathcal{V}_i$} in the next row). This is due to the formation of the new causes. For example, the controller that avoids $\Phi_1$ favors $u^0 = 0$, thus causes congestion on link $3$. $\Phi_1$ and $\Phi_2$ together can balance the control choice for the first signal $u^0$. The same situation occurs for the second signal $u^1$ with $\Phi_3$ and $\Phi_4$. In this example, the controller generated from $\Psi = \Phi_1 \vee \ldots \vee \Phi_4$ reduces the total congestion count to $2$ ($0.05\%$), and subsequent iterations do not reduce it further.

This example shows that the proposed iterative controller synthesis algorithm is able to reduce the number of unwanted events significantly when a control strategy guaranteeing the avoidance of these events does not exist.

\subsection{Region Avoidance}

In this case study, we consider a basic region/obstacle avoidance scenario for a robot that can move in a planar arena. The arena is represented as a two-dimensional grid of size $M_0 \times M_1$. The state of the robot is its location, i.e., $x_k \in [0,\ldots,M_0-1] \times [0,\ldots,M_1-1] $, and the control set is $\mathbb{U} = \{N,E,S,W\}$. The robot can move according to the control input, but it can not leave the arena: 
\begin{align}\label{eq:toy}
[ x^0_{k+1}, x^1_{k+1} ] = \begin{cases}   [ \min(0,x^0_{k} - 1), x^1_{k} ] \text{ if } u = N\\
								[x^0_{k}, \max(x^1_{k} + 1, M_1-1)]\text{ if } u = E\\
							 [ \max(x^0_{k} + 1, M_0-1), x^1_{k} ] \text{ if } u = S \\
							 [ x^0_{k} , \min(0,x^1_{k}-1) ] \text{ if } u = W
							 \end{cases}
\end{align}

% 
% In?tial label count: 705
% ( A 1 1 ( x2 = 3.0 ) ) & ( P 1 1 ( ( x1 > 3.0 ) & ( x0 > 2.0 ) ) ) fp = 0, tp = 126
% Viol cnt after controller 1: 429
% ( A 1 1 ( x2 = 1.0 ) ) & ( P 1 1 ( ( x1 < 4.0 ) & ( x0 < 3.0 ) ) ) fp = 0, tp = 125
% Viol cnt after controller 2: 109
% ( A 1 1 ( x2 = 2.0 ) ) & ( P 1 1 ( ( x1 > 2.0 ) & ( x0 > 3.0 ) ) ) fp = 0, tp = 42
% Viol cnt after controller 3: 77
% ( A 1 1 ( x2 = 4.0 ) ) & ( P 1 1 ( ( x1 < 5.0 ) & ( x0 < 2.0 ) ) ) fp = 0, tp = 52
% Viol cnt after controller 4: 0

\begin{figure}[h]
\centering
\begin{subfigure}[b]{0.21\textwidth}
\includegraphics[width=\textwidth]{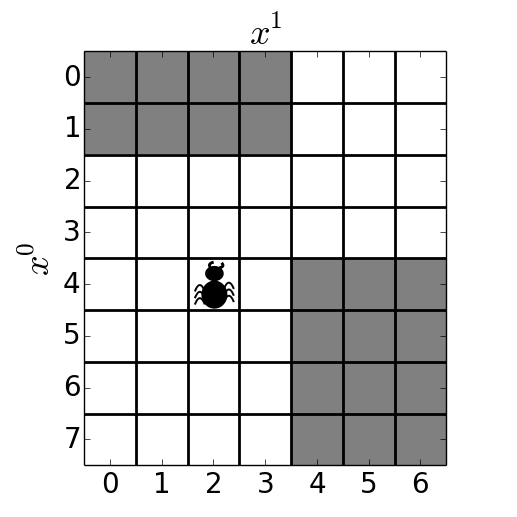}\caption{}
% \caption{Arena}
        \label{fig:arena}
    \end{subfigure}
    \begin{subfigure}[b]{0.24\textwidth}
\includegraphics[width=\textwidth]{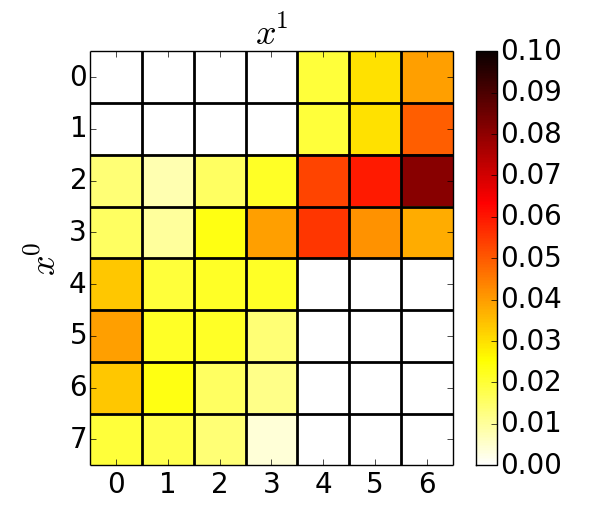}\caption{}
% \caption{The average time spent}
        \label{fig:heatmap}
    \end{subfigure}
\caption{(a) An 8x7 arena. The robot is at $[4,2]$. The dangerous regions are shown in gray. (b) The ratio of time spent in a cell is shown with a heatmap.}\label{fig:toy}
\end{figure}

The dangerous regions are shown in gray in Fig.~\ref{fig:arena}. The robot initially does not know the arena. It senses the dangerous region when it is in it (label 1). We run Alg.~\ref{alg:iterativeapproach} on system~\eqref{eq:toy} with parameters $\overline{oc}=1$, $\underline{oc}=1$, $\overline{p}=1$, ${\underline{val}}=0$, and the datasets include $20$ traces with length $100$.  The algorithm terminates after 6 iterations. The total number of labeled data points, the formula generated by Alg.~\ref{alg:formulasearch} and its $tp$ count are shown in Table~\ref{tab:toy}. 

\begin{table}[h!]
\begin{center}
 \begin{tabular}{|c| c | c | c|c|}
 \hline
 $i $ & {$\mathcal{V}_i$}  & $\Phi_i$  & $tp$ & $fp$ \\ [0.5ex] 
 \hline\hline
  1 & 705 & $ (\pastG_{[1,1]} u = S ) \wedge (\pastF_{[1,1]} ( x^1 > 3  \wedge x^0 > 2) )  $ & 126 & 0  \\ 
 \hline
  2 & 429 & $ (\pastG_{[1,1]} u = N ) \wedge (\pastF_{[1,1]} ( x^1 < 4  \wedge x^0 < 3) )  $ & 125 & 0  \\ 
 \hline
  3 & 109 & $ (\pastG_{[1,1]} u = E ) \wedge (\pastF_{[1,1]} ( x^1 > 2  \wedge x^0 > 3) )  $ & 42 & 0  \\ 
 \hline 
  4 & 77 & $ (\pastG_{[1,1]} u = W ) \wedge (\pastF_{[1,1]} ( x^1 < 5  \wedge x^0 < 2 ) )  $ & 52 & 0  \\ 
 \hline 
  5 & 0 & --   & -- & --  \\ 
 \hline 
  \end{tabular}
\end{center}
\caption{Region avoidance: Violation counts and the optimized cause formulas for each iteration of Alg.~\ref{alg:iterativeapproach}.}\label{tab:toy}
\end{table}

For a trace of the robot controlled by Alg.~\ref{alg:controller} with formula $\Psi=\Phi_1 \vee \Phi_2 \vee \Phi_3 \vee \Phi_4$, the number of visits of a cell over the length of the trace is shown in Fig.~\ref{fig:heatmap} with a heatmap. Due to the randomness in the controller (Alg.~\ref{alg:controller}), the robot can explore the arena without going through the dangerous regions. This simple example shows that the proposed approach can be used to explore an arena and to learn a controller that avoids the bad areas. 

\section{Conclusions and Future Works}\label{sec:conc}
 
 We presented a framework for synthesis of feedback controllers to avoid unwanted events. The main steps are the identification of the controllable causes as a ptSTL formula, and the synthesis of feedback controllers to avoid the satisfaction of the cause formula. The proposed method can be applied to any discrete-time dynamical system with a finite control set. We showed on two examples that the system reduces the number of the unwanted events.  The reduction rate depends on the causality relation between the control inputs and the unwanted events, and the parameters of  the formula synthesis algorithm. A stronger causality relation together with the suitable parameters yields a higher reduction rate. Future research will address verification of the resulting controlled system to  guarantee correctness, and identifying other controllable cause structures.

%A drawback of the proposed approach is the complexity of the formula synthesis algorithm. Future research will investigate new approaches to handle higher operator and sub-formula limits, thus more complex formulas can be generated. Other possible extensions involve applying it to systems with infinite control sets, identifying other controllable cause structures, and developing theory to identify systems for which this approach can yield reduction in the unwanted events. 

\bibliography{irmakbib}

\end{document}